\documentclass{article}
\usepackage{ijcai17}
\usepackage{times}
\usepackage{url}
\usepackage{amsmath}
\usepackage{amsthm}
\usepackage{graphicx}
\usepackage[small]{caption}
\usepackage{graphics}
\usepackage{epsfig}
\usepackage{latexsym}
\usepackage{amsfonts}
\usepackage{paralist}
\usepackage{xspace}
\usepackage{setspace}
\usepackage{mathrsfs}
\usepackage{amssymb}
\usepackage{color}
\usepackage{algorithmic}
\usepackage[ruled]{algorithm}
 \usepackage{bm}
\newtheorem{theorem}{Theorem}

\newtheorem{definition}{Definition}
\newtheorem{lemma}{Lemma}





\title{Toward Optimal Coupon Allocation in Social Networks: An Approximate Submodular Optimization Approach}
\author{Shaojie Tang\\
Naveen Jindal School of Management\\
University of Texas at Dallas\\
shaojie.tang@utdallas.edu\\
}

\begin{document}

\maketitle

\begin{abstract}
CMO Council reports that 71\% of internet users in the U.S. were influenced by coupons and discounts when making their purchase decisions. It has also been shown that offering coupons to a small fraction of users (called seed
users) may affect the purchase decisions of many other users in a social network. This motivates us to study the optimal coupon allocation problem, and our objective is to allocate coupons to a set of users so as to maximize the expected cascade. Different from existing studies on influence maximizaton (IM), our framework allows a general utility function and a more complex set of
constraints. In particular, we formulate our problem as an approximate submodular maximization problem subject to matroid and knapsack constraints. Existing techniques relying on the submodularity of the utility function, such as greedy algorithm, can not work directly on a non-submodular function. We use $\epsilon$ to measure the difference between our function and its closest submodular function and propose a novel approximate algorithm with approximation ratio $\beta(\epsilon)$ with $\lim_{\epsilon\rightarrow 0}\beta(\epsilon)=1-1/e$. This is the best approximation guarantee for approximate submodular maximization subject to a partition matroid and knapsack constraints, our results apply to a broad range of optimization problems that can be formulated as an approximate submodular maximization problem.
\end{abstract}


%

\section{Introduction}
\label{sec:intro}
It has been reported that one of the most effective ways of running promotions on social media is the use of coupon campaigns on the largest social media site such as Facebook and Twitter.  Different from conventional online advertising, the offer of a coupon on social network could trigger a large amount of likes and shares on your product and build brand awareness rapidly.  This motivates us to study the coupon allocation problem, e.g., allocating coupons to a few users so as to maximize the expected cascade. More specifically, we assume that the company would like to promote a product through offering coupons with different values to a subset of users. Whether a user would accept the coupon and further become the seed node is probabilistic, which depends on the value of the coupon. Our goal is to select a set of users and allocate one proper coupon to each of them, such that the expected cascade is maximized.

We notice that our problem is closely related to influence maximization problem (IM). Although IM has been extensively studied in the literature \cite{kempe2003maximizing,chen2013information,leskovec2007cost,cohen2014sketch,chalermsook2015social}, most of existing studies assume a ``budgeted, deterministic and submodular'' setting, where there
is a predefined budget for selecting seed nodes, any node is guaranteed to be activated as a seed
node once it is selected, and the utility function is assumed to be submodular in terms of seed nodes. However, these assumptions may not always hold in reality. Firstly, the probability that a user is activated as a seed node depends on many factors, including the actual amount of rewards received from the company \cite{yangcontinuous}. Secondly, the utility function may not always satisfy the submodular property, e.g., some user can only be influenced if majority of her friends are activated \cite{kuhnle2017scalable}.

 In this paper, we study the coupon allocation problem that allows a general utility function. In particular, we assume an $\epsilon$-approximate submodular function where $\epsilon$ measures the distance between a given function and its closest submodular function (a detailed definition can be found in Section \ref{sec:pre}). We formulate the coupon allocation problem as an approximate submodular maximization problem subject to a partition matroid and knapsack constraints. Existing techniques relying on the submodularity of the utility function, such as greedy algorithm, can not work directly on a non-submodular function. We propose a novel approximate algorithm with approximation ratio $\beta(\epsilon)$ with $\lim_{\epsilon\rightarrow 0}\beta(\epsilon)=1-1/e$. This is, to the best of our knowledge, the strongest theoretical result available for approximate submodular maximization problem subject to a partition matroid and knapsack constraints. Although we restrict our attention to coupon allocation in this paper, our results apply to a broad range of non-submodular maximization problems.

The contributions of this paper can be summarized as follows:
(1) We are the first to study the coupon allocation problem that allows a general utility function. Our objective is to determine the best coupon allocation so as to maximize the expected cascade subject to (partition) matroid and knapsack constraints.
(2) We develop an efficient algorithm with approximation ratio that depends on $\epsilon$. When $\epsilon=0$ (the utility function is submodular), our result converges to $1-1/e$ which is the best possible for submodular maximization subject to matroid constraints. This research contributes fundamentally to the development of approximate solutions for any problems that fall into the family of approximate submodular maximization.

Some important notations are listed in Table \ref{symbol}.

\begin{table}[t]\centering
\caption{Symbol table.}
\begin{tabular}{{|c|l|}}
\hline
\textbf{Notation} & \textbf{Meaning}\\
\hline
\hline
$\mathcal{S}$ & the ground set that contains all u-c pairs\\
\hline
$\mathcal{S}_v$ & all u-c pairs whose user is $v$\\
\hline
$f(S)$ & the expected cascade of $S$\\
\hline
$\mathbf{y}$ & a decision matrix\\
\hline
$F(\mathbf{y})$ & the multilinear extension of $f$\\
\hline
$f^+(\mathbf{y})$ & the concave extension of $f$\\
\hline
$g$ & a submodular function used to bound $f$\\
\hline
$\mathbf{a}\oplus \mathbf{b}$ & the coordinate wise maximum of $\mathbf{a}$ and $\mathbf{b}$ \\
\hline
$F_\mathbf{\mathbf{x}}([vd])$ & $F(\mathbf{x}\oplus \mathbf{1}_{vd})-F(\mathbf{x})$\\
\hline
\end{tabular}
\label{symbol}
\end{table}

\section{Related Work}
IM has been extensively studied in the literature \cite{kempe2003maximizing,chen2013information,leskovec2007cost,cohen2014sketch,chalermsook2015social}, their objective is to find a set of influential customers so as to maximize the expected cascade. However, our work differ from all existing studies in several major aspects. Traditional IM assumes any node is guaranteed to be activated once it is selected, we relax this assumption by allowing users to response differently to different coupon values.  Recently, \cite{yangcontinuous} study discount allocation problem in social networks. However, they assume a submodular utility function and continuous setting of discount value, our model allows a general utility function. We formulate our problem as an approximate submodular maximization problem subject to matroid and knapsack constraints.  Existing approaches, such as greedy algorithm \cite{horel2016maximization}, can not apply directly to matroid and knapsack constraints. We propose a novel algorithm that provides the first bounded approximate solutions to this problem. It was worth noting that our approximation ratio converges to $1-1/e$ when $\epsilon=0$. This is the best theoretical result available for approximate submodular maximization subject to a partition matroid constraint.

\section{Preliminaries}
\subsection{Submodular Function and Its Continuous Extensions}
\label{sec:pre}
A submodular function is a set function $h:2^{\Omega}\rightarrow \mathbb{R}$, where $2^{\Omega } $ denotes the power set of  $\Omega$, which satisfies a natural ``diminishing returns" property: the marginal gain from adding an element to a set $X$ is at least as high as the marginal gain from adding the same element to a superset of $X$. Formally, a submodular function satisfies the follow property: For every $X, Y \subseteq \Omega$ with $X \subseteq Y$ and every $x \in \Omega \backslash Y$, we have that $h(X\cup \{x\})-h(X)\geq h(Y\cup \{x\})-h(Y)$
We say a submodular function $f$ is monotone if $h(X) \leq h(Y)$ whenever $X \subseteq Y$.

\subsubsection{Continuous Extensions} Consider any vector $\bold{x}=\{x_1,x_2,\ldots,x_n\}$ such that each $0\leq x_i\leq 1$. The multilinear extension of $h$ is defined as $H(\bold{x})=\sum_{X\subseteq \Omega} h(X) \prod_{i\in X} x_i \prod_{i\notin X} (1-x_i)$, and the concave extension of $h$ is defined as  $
h^+(\mathbf{x})=\max \{\sum_{X\subseteq \Omega} \alpha_X f(X)| \alpha_X \geq 0, \sum_{X\subseteq \Omega}\alpha_X \leq1, \sum_{X}\alpha_X\mathbf{1}_X \leq \mathbf{x}\}$.

\subsubsection {$\epsilon$-approximate Submodular Function} An  $\epsilon$-approximate submodular function is a set function $f:2^{\Omega}\rightarrow \mathbb{R}$ that satisfies the following condition: there exists a submodular function $h$ such that for any $X\subseteq \Omega$, we have $(1-\epsilon)h(X) \leq f(X) \leq (1+\epsilon)h(X)$.

\subsection{Coupon Adoption and Propagation Model}
Assume there are $n$ users $\mathcal{V}=\{1, 2,\dots, n\}$ and $m$ types of coupons $\mathcal{D}=\{1, 2,\dots, m\}$. For simplicity of presentation, we will directly use $d$ to denote the coupon value of $d$.  Given the set of users $\mathcal{V}$ and possible coupon values $\mathcal{D}$, define $\mathcal{S} \triangleq \mathcal{V}\times \mathcal{D}$ as the solution space, adding a \emph{user-coupon} (u-c) pair $[vd] \in \mathcal{S} $ to our solution translates to offering coupon  $d\in \mathcal{D}$ to user $v\in \mathcal{V}$. When any coupon $d\in \mathcal{D}$ is allocated to any user $v\in \mathcal{V}$, we assume that with probability $p_v(d)\in(0,1]$, $v$ adopts $d$ and becomes a seed node, thus incurs a cost $d$. We further assume that coupons cannot be combined, if a user received multiple coupons, her adoption decision only depends on the coupon with the highest value.

After a set of users become seed nodes, they start to influence other users. Assume $U$ is the seed set, the expected cascade of $U$, which is the expected number of influenced users given seed set $U$, is denoted as $\gamma(U)$. In this work, we assume that $\gamma(U)$ is a $\epsilon$-approximate submodular function: there exists a submodular function $q$ such that for any $U\in \mathcal{V}$, we have $(1-\epsilon)q(U) \leq \gamma(U) \leq (1+\epsilon)q(U)$. Our propagation model subsumes many classic propagation models, including \emph{Independent Cascade Model} and \emph{Linear Threshold Model} \cite{kempe2003maximizing} as special cases, as the propagation functions defined in their models are submodular and monotone, i.e., $\epsilon=0$.

\section{Problem Statement}
Our objective is to identify the best coupon allocation policy, not necessarily deterministic, to maximize the expected cascade.
 Given a coupon allocation $S\subseteq \mathcal{S}$, we use $d_S(v)$ to denote the largest coupon value allocated to user $v$ under $S$, then the probability that a subset of users $U \subseteq \mathcal{V}$ successfully become the seed set is
\[\Pr(U;S)=\prod_{u\in U} p_u(d_S(u)) \prod_{v\in \mathcal{V}\setminus U}(1- p_v (d_S(v)))\]
As introduced earlier, we use $\gamma(U)$ to denote the expected cascade under seed set $U$, then the expected cascade under allocation $S$ is
 $f(S)=\sum_{U\in 2^{\mathcal{V}}}\Pr(U;S) \gamma(U)$.

 We first prove that $f$ is also $\epsilon$-approximate submodular.
\begin{lemma}
\label{lem:00}
If $\gamma$ is $\epsilon$-approximate submodular, $f$ is also $\epsilon$-approximate submodular: there exists a submodular function $g$ such that  $\forall S\in \mathcal{S}: (1-\epsilon)g(S) \leq f(S) \leq (1+\epsilon)g(S)$.
\end{lemma}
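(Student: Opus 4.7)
The plan is to take the candidate bounding submodular function to be
\[
g(S) \;:=\; \sum_{U\subseteq\mathcal{V}}\Pr(U;S)\,q(U),
\]
where $q$ is the (WLOG monotone) submodular function witnessing the $\epsilon$-approximate submodularity of $\gamma$; in other words, $g$ is the same expectation defining $f$, with $\gamma$ replaced by $q$. The sandwich bound $(1-\epsilon)g(S)\le f(S)\le(1+\epsilon)g(S)$ is then immediate: multiply the pointwise inequality $(1-\epsilon)q(U)\le\gamma(U)\le(1+\epsilon)q(U)$ by $\Pr(U;S)\ge 0$ and sum over $U$. This reduces the lemma to checking that $g$ is itself submodular on $2^{\mathcal{S}}$.

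For submodularity of $g$, the idea is to exploit the independence of seeding across users. Write $\pi_v(S):=p_v(d_S(v))$ for the marginal probability that $v$ becomes a seed, and let $Q:[0,1]^{\mathcal{V}}\to\mathbb{R}$ be the multilinear extension of $q$. Since $\Pr(U;S)=\prod_{v\in U}\pi_v(S)\prod_{v\notin U}(1-\pi_v(S))$, we have $g(S)=Q(\pi(S))$. For a u-c pair $[vd]\notin S$, adding it affects only the coordinate $\pi_v$, and $Q$ is linear in each coordinate, so
\[
g(S\cup\{[vd]\})-g(S)\;=\;\bigl(\pi_v(S\cup\{[vd]\})-\pi_v(S)\bigr)\cdot\partial_v Q(\pi(S)),
\]
where $\partial_v Q$ denotes the single-coordinate partial of $Q$ (independent of its own $v$-entry). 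For $S\subseteq T$ with $[vd]\notin T$, both factors on the right are non-negative and non-increasing when $S$ is replaced by $T$: $\partial_v Q$ is non-negative and non-increasing in every coordinate of $\pi$ because $q$ is monotone submodular and $\pi(S)\le\pi(T)$ coordinatewise; the $\pi_v$-increment is controlled by a short case analysis on how $d$ compares with $d_S(v)\le d_T(v)$, using monotonicity of $p_v$ in the coupon value. The product of two non-negative non-increasing quantities is non-increasing, which is exactly the diminishing-returns property for $g$.

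\emph{Main obstacle.} The substantive work is the submodularity of $g$, not the sandwich bound. The crux is the composition step $g=Q\circ\pi$: one must check that $\pi$ behaves well enough under set enlargement to preserve the DR-submodularity of $Q$. This hinges on two structural features of the model, namely that only the strongest coupon per user matters (so $\pi_v$ depends only on $S\cap\mathcal{S}_v$) and that the adoption probability $p_v(d)$ is monotone in $d$ (implicit in the paper's formulation), together with the fact that the expected cascade $\gamma$ is monotone, which lets us take the bounding $q$ to be monotone as well.
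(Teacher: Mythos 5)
Your proposal is correct in substance and follows the same skeleton as the paper: the paper also takes $g(S)=\sum_{U\subseteq\mathcal{V}}\Pr(U;S)\,q(U)$, gets the sandwich bound by the same pointwise-multiply-and-sum argument, and then asserts submodularity of $S\mapsto\sum_U\Pr(U;S)q(U)$ by citing \cite{soma2015generalization}. Where you differ is that you actually prove that composition fact, via $g=Q\circ\pi$ with $Q$ the multilinear extension of $q$, the single-coordinate factorization of the marginal, and the monotone-times-monotone-decreasing product argument; your case analysis on $d$ versus $d_S(v)\le d_T(v)$ is sound, and the facts you use about $\partial_v Q$ (independence of its own coordinate, nonnegativity under monotone $q$, coordinatewise non-increase under submodular $q$) are all standard. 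The trade-off is that your self-contained argument makes visible two hypotheses the lemma statement does not contain: that $p_v(d)$ is nondecreasing in $d$ and that the witness $q$ is monotone. The first is a reasonable reading of the model; the second is genuinely not granted by the lemma's hypothesis, and your ``WLOG $q$ monotone'' step (from monotonicity of $\gamma$) is not automatic --- a monotone $\gamma$ can be $\epsilon$-sandwiched by a non-monotone submodular $q$, and it is not clear one can re-witness with a monotone $q$ at the same $\epsilon$. Moreover monotonicity is not cosmetic: with a single user, two coupon values with $p_v(d_1)<p_v(d_2)$, and a decreasing $q$, the composed $g$ violates submodularity, so some such assumption is needed for the lemma to hold at all. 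The paper's one-line proof silently inherits the same assumptions through the citation (the result in \cite{soma2015generalization} is for monotone submodular functions), so your write-up is, if anything, more honest about what is being used; just state $p_v$ monotone and $q$ monotone as explicit assumptions rather than deriving the latter.
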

\begin{proof}Based on a similar proof provided in \cite{soma2015generalization}, we can show that if  $q(U)$ is submodular of $U$, $\sum_{U\in 2^{\mathcal{V}}}\Pr(U;S) q(U)$ is submodular of $S$. Because \[(1-\epsilon)q(U) \leq \gamma(U) \leq (1+\epsilon)q(U)\] we have $(1-\epsilon)\sum_{U\in 2^{\mathcal{V}}}\Pr(U;S) q(U) \leq f(S)\leq (1+\epsilon)\sum_{U\in 2^{\mathcal{V}}}\Pr(U;S) q(U)$. We finish the proof by defining $g(S)=\sum_{U\in 2^{\mathcal{V}}}\Pr(U;S) q(U)$.
\end{proof}

The expected cost of a coupon allocation $S$ is
\[c(S)=\sum_{U\in 2^{\mathcal{V}}} \left(\Pr(U;S)\sum_{u\in U}d_S(u)\right)\]

\paragraph{Coupon Allocation Policy} We denote by $\Theta=\{\theta_S:S\subseteq \mathcal{S}\}$ a coupon allocation policy, where $\theta_S$ is the probability that $S$ is adopted.
Then the expected cascade under $\Theta$ is
$f(\Theta)=\sum_{S\in \mathcal{S}}\theta_{S} f(S)$. The expected cost of $\Theta$ is
$c(\Theta)= \sum_{S\in \mathcal{S}}\theta_{S} c(S)$.

\begin{definition}
[Feasible Coupon Allocation Policy] We say a policy $\Theta$ is feasible if and only the following conditions are satisfied:
\begin{itemize}
\item(Attention Constraint) For every user $v\in \mathcal{V}$, we denote $\mathcal{S}_v=\{[vd]: d\in \mathcal{D}\}$. $\forall \theta_S>0, \forall v: |\mathcal{S}_v\cap S|\leq 1$, e.g., every user receives at most one coupon.
\item(Budget Constraint) $ \sum_{S\in \mathcal{S}}\theta_{S} c(S)\leq B$, e.g., the expected  value of the coupons redeemed is at most $B$.
\end{itemize}\label{def:0}
\end{definition}


Our objective is to identify a feasible coupon allocation policy that maximizes the expected cascade. We present the formal definition of our problem in \textbf{P.A}.
 \begin{center}
\framebox[0.45\textwidth][c]{
\enspace
\begin{minipage}[t]{0.45\textwidth}
\small
\textbf{P.A} $\max_{\Theta} \sum_{S\in \mathcal{S}}\theta_{S} f(S)$\\
\textbf{subject to:}
\begin{equation*}
\begin{cases}
\forall \theta_S>0, \forall v\in \mathcal{V}: |\mathcal{S}_v\cap S|\leq 1 \quad \mbox{(Ca)}\\
\sum_{S\in \mathcal{S}}\theta_{S} c(S)\leq B \quad \mbox{(Cb)}\\
\sum_{S\in \mathcal{S}}\theta_{S}= 1 \quad \mbox{(Cc)}
\end{cases}
\end{equation*}
\end{minipage}
}
\end{center}
\vspace{0.1in}

We extend our model in Section \ref{sec:extended} by incorporating one more constraint, e.g., a feasible policy can not allocate coupons to more than $K$ users. This constraint captures the fact that the company often has limited budgeted on coupon producing and distribution.

\section{An Approximate Solution}

We first introduce a new problem \textbf{P.B} as follows.
 \begin{center}
\framebox[0.45\textwidth][c]{
\enspace
\begin{minipage}[t]{0.45\textwidth}
\small
\textbf{P.B:}
\emph{Maximize $f^+(\mathbf{y})$}\\
\textbf{subject to:}
\begin{equation*}
\begin{cases}
\forall v\in \mathcal{V}: \sum_{d\in \mathcal{D}} y_{vd}\leq1 \quad\mbox{(C1)}\\
  \sum_{[vd]\in \mathcal{S}} y_{vd}p_v(d)d\leq B \quad\mbox{(C2)}\\
\forall [vd]\in \mathcal{S}: y_{vd} \in[0,1] \quad \mbox{(C3)}
\end{cases}
\end{equation*}
\end{minipage}
}
\end{center}
\vspace{0.1in}
In the above formulation,  $\mathbf{y}$ is a $n\times m$  decision matrix  and $f^+(\mathbf{y})$ is a concave extension of $f$.

\begin{equation}
f^+(\mathbf{y}) = \max \left\{\sum_{S\in \mathcal{S}} \alpha_S f(S) \middle\vert \begin{array}{l}
    \alpha_S \geq 0;\\
      \sum_{S\in \mathcal{S}}\alpha_S\leq1;\\
      \forall v: \sum_{S \ni [vd] }\alpha_S \leq y_{vd}
  \end{array}\right\}
  \label{eq:1}
\end{equation}

 We first prove that  \textbf{P.B} is a relaxed version of \textbf{P.A}.
\begin{lemma} Assume $\Theta^*$ is the optimal solution to \textbf{P.A} and $\mathbf{y}^+$ is the optimal solution to \textbf{P.B}, we have
$f^+(\mathbf{y}^+)\geq \sum_{S\in \mathcal{S}}\theta^*_{S} f(S)$.
\end{lemma}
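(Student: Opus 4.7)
The plan is to exhibit a feasible solution to \textbf{P.B} whose objective matches $\sum_{S} \theta^*_S f(S)$; since $\mathbf{y}^+$ is optimal for \textbf{P.B}, the inequality follows immediately. Concretely, given the optimal policy $\Theta^* = \{\theta^*_S\}$ for \textbf{P.A}, I would define the induced marginal vector
\[
y^*_{vd} \;=\; \sum_{S \ni [vd]} \theta^*_S \qquad \text{for every }[vd]\in \mathcal{S}.
\]
The proof then splits into two tasks: verify that $\mathbf{y}^*$ satisfies (C1)--(C3) of \textbf{P.B}, and verify that $f^+(\mathbf{y}^*) \geq \sum_S \theta^*_S f(S)$.

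For feasibility, (C1) follows from the attention constraint (Ca): for every $S$ with $\theta^*_S > 0$, at most one $[vd]$ lies in $\mathcal{S}_v \cap S$, so $\sum_{d} y^*_{vd} = \sum_{S : \mathcal{S}_v \cap S \ne \emptyset} \theta^*_S \leq \sum_S \theta^*_S = 1$. Constraint (C3) is immediate since each $y^*_{vd}$ is a sum of nonnegative probabilities bounded by $1$. Constraint (C2) requires rewriting $c(S)$: because each user $v$ is independently a seed with probability $p_v(d_S(v))$, one has $c(S) = \sum_{v} d_S(v) p_v(d_S(v)) = \sum_{[vd]\in S} d\, p_v(d)$. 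Swapping the order of summation,
\[
\sum_{[vd]\in\mathcal{S}} y^*_{vd}\, p_v(d)\, d \;=\; \sum_{S} \theta^*_S \sum_{[vd]\in S} d\, p_v(d) \;=\; \sum_{S}\theta^*_S\, c(S) \;\leq\; B,
\]
where the final inequality is the budget constraint (Cb).

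For the objective comparison, I would use $\{\theta^*_S\}$ itself as the witness collection $\{\alpha_S\}$ in the definition (\ref{eq:1}) of $f^+(\mathbf{y}^*)$. Nonnegativity and $\sum_S \alpha_S \leq 1$ hold by (Cc), and $\sum_{S \ni [vd]} \alpha_S = y^*_{vd}$ by the very definition of $y^*_{vd}$, so $\{\theta^*_S\}$ is a feasible choice in (\ref{eq:1}). Therefore $f^+(\mathbf{y}^*) \geq \sum_{S} \theta^*_S f(S)$. Since $\mathbf{y}^*$ is feasible for \textbf{P.B} and $\mathbf{y}^+$ is optimal,
\[
f^+(\mathbf{y}^+) \;\geq\; f^+(\mathbf{y}^*) \;\geq\; \sum_{S\in\mathcal{S}}\theta^*_S f(S),
\]
which is the claim.

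The only mildly nonroutine step is verifying (C2), because the cost $c(S)$ is defined as an expectation over random seed realizations rather than as a simple sum over $S$; collapsing the Bernoulli expectation into the linear form $\sum_{[vd]\in S} d\, p_v(d)$ is what makes the budget carry over cleanly under the lift $S \mapsto \mathbf{y}^*$. The remaining parts are just bookkeeping with indicator sums.
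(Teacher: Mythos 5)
Your proof is correct and follows essentially the same route as the paper: lift $\Theta^*$ to the marginal vector $y^*_{vd}=\sum_{S\ni[vd]}\theta^*_S$, check (C1)--(C3), and plug $\{\theta^*_S\}$ in as the witness $\{\alpha_S\}$ in the definition of $f^+$. Your only addition is spelling out the (C2) check by collapsing $c(S)$ to $\sum_{[vd]\in S} d\,p_v(d)$ and swapping sums, which the paper simply asserts as ``the expected cost of $\Theta^*$''; this is a harmless (indeed welcome) extra detail, not a different argument.
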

\begin{proof}
Given the optimal policy $\Theta^*$, for every u-c pair $[vd]\in \mathcal{S}$, we define $y^*_{vd}$ as the probability that $[vd]$ is offered by $\Theta^*$, e.g., $y^*_{vd}=\sum_{S\ni [vd]}\theta^*_S$. We first prove that $\mathbf{y}^*$ is a feasible solution to \textbf{P.B}.
Because $\Theta^*$ is a feasible policy, we have
\begin{enumerate}
\item $\forall \theta^*_S>0: |\mathcal{S}_v\cap S|\leq 1$, e.g., every user receives at most one coupon under $\Theta^*$. It follows that $\forall v\in \mathcal{V}:  \sum_{d\in \mathcal{D}} y^*_{vd}= \sum_{d\in \mathcal{D}}\sum_{S\ni [vd]}\theta^*_S\leq \sum_{S\in \mathcal{S}} \theta^*_S=1$, thus condition (C1) is satisfied.


\item $\sum_{S\in \mathcal{S}}\theta^*_{S} c(S)\leq B$, e.g., the expected cost of any coupon allocation under  $\Theta^*$ is no larger than $B$. Because $\sum_{[vd]\in \mathcal{S}} y^*_{vd}p_v(d)d$ is the expected cost of $\Theta^*$, we have $\sum_{[vd]\in \mathcal{S}} y^*_{vd}p_v(d)d\leq B$,  thus condition (C2) is satisfied.

\item  $\sum_{S\in \mathcal{S}}\theta_{S}= 1$. Because $\forall [vd]\in \mathcal{S}: y_{vd}$ is the probability that $[vd]$ is offered by $\Theta^*$, we have $\forall [vd]\in \mathcal{S}: y_{vd} \in[0,1]$,  thus condition (C3) is satisfied.
\end{enumerate}
On the other hand, by setting $\alpha_S=\theta_S$ for every $S\in \mathcal{S}$ in (\ref{eq:1}), we have $f^+(\mathbf{y}^*)\geq \sum_{S\in \mathcal{S}}\theta^*_{S} f(S)$.
\end{proof}
Assume $g^+$ is the concave extension of $g$, we next prove that $g^+(\mathbf{y})$ is an approximate of $f^+(\mathbf{y})$.
\begin{lemma}
\label{lem:3}
$f^+(\mathbf{y})\leq (1+\epsilon)g^+(\mathbf{y})$
\end{lemma}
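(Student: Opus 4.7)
The plan is to exploit the fact that the concave extensions $f^+$ and $g^+$ are defined by exactly the same linear constraints on the distribution $\alpha$ (they depend only on $\mathbf{y}$), so any feasible $\alpha$ for one is feasible for the other. Then the pointwise bound $f(S)\le (1+\epsilon)g(S)$ from Lemma~\ref{lem:00} transfers directly.

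Concretely, first I would let $\alpha^\star=\{\alpha^\star_S\}_{S\subseteq \mathcal{S}}$ be an optimal solution to the maximization problem defining $f^+(\mathbf{y})$ in (\ref{eq:1}), so that $f^+(\mathbf{y})=\sum_{S\in\mathcal{S}}\alpha^\star_S f(S)$. Next I would invoke Lemma~\ref{lem:00}, which gives $f(S)\le (1+\epsilon)g(S)$ for every $S\subseteq \mathcal{S}$; multiplying by the nonnegative weights $\alpha^\star_S$ and summing yields
\begin{equation*}
f^+(\mathbf{y})\;=\;\sum_{S\in\mathcal{S}}\alpha^\star_S f(S)\;\le\;(1+\epsilon)\sum_{S\in\mathcal{S}}\alpha^\star_S g(S).
\end{equation*}

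Then I would observe that the three feasibility conditions in (\ref{eq:1}) ($\alpha_S\ge 0$, $\sum_S\alpha_S\le 1$, and $\sum_{S\ni[vd]}\alpha_S\le y_{vd}$) involve neither $f$ nor $g$. Hence $\alpha^\star$ is also a feasible choice in the analogous maximization defining $g^+(\mathbf{y})$, which gives $\sum_{S\in\mathcal{S}}\alpha^\star_S g(S)\le g^+(\mathbf{y})$. Combining the two inequalities delivers $f^+(\mathbf{y})\le(1+\epsilon)g^+(\mathbf{y})$, as desired.

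There is essentially no hard step here; the main thing to be careful about is that the proof uses nonnegativity of $\alpha^\star_S$ and of $g$ (so that multiplying the pointwise bound preserves the direction of the inequality). Nonnegativity of $g$ follows because $g(S)=\sum_U\Pr(U;S)\,q(U)$ and $q$ is a nonnegative expected-cascade surrogate of $\gamma$. Thus the argument is short and purely linear-algebraic, with no need to look inside the multilinear structure of $f$ or $g$.
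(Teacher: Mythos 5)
Your proof is correct and follows essentially the same route as the paper: take the optimal weights $\alpha^\star$ attaining $f^+(\mathbf{y})$, apply the pointwise bound $f(S)\leq(1+\epsilon)g(S)$ from Lemma~\ref{lem:00}, and note that the same $\alpha^\star$ is feasible in the definition of $g^+(\mathbf{y})$. Your extra remark about nonnegativity of $\alpha^\star_S$ is a fine (if slightly over-cautious) clarification, and the argument matches the paper's proof in substance.
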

\begin{proof}Given any $\mathbf{y}$, assume the value of $f^+(\mathbf{y})$ is achieved at $\{\alpha^*_A|A\subseteq \mathcal{V}\}$, we have $\sum_{A\subseteq \mathcal{V}} \alpha^*_A f(A)\leq \sum_{A\subseteq \mathcal{V}} \alpha^*_A (1+\epsilon)g(A)=(1+\epsilon)\sum_{A\subseteq \mathcal{V}} \alpha^*_A g(A)\leq (1+\epsilon)g^+(\mathbf{y})$.
\end{proof}

\subsection{Algorithm Design}
In this section, we present a greedy algorithm that achieves a bounded approximation ratio. Our general idea is to first find a fractional solution with a bounded approximation ratio and then round it to an integral solution.
\subsubsection{Continuous Greedy}

\begin{algorithm}[h]
{\small
\caption{Continuous Greedy}
\label{alg:greedy-peak}
\begin{algorithmic}[1]
\STATE Set $\delta=1/(nm)^2, t=0, f(\emptyset)=0$.
\WHILE{$t<1$}
\STATE Let $R(t)$ contain each $[vd]$ independent with probability $y_{vd}(t)$.
\STATE For each $[vd]\in \mathcal{S}$, estimate
\[\omega_{vd}=\mathbb{E}[f(R(t)\cup\{[vd]\})]-\mathbb{E}[[f(R(t))]\]
\STATE Solve the following liner programming problem and obtain the optimal solution $\overline{\mathbf{y}}$
\STATE
\framebox[0.42\textwidth][c]{
\enspace
\begin{minipage}[t]{0.42\textwidth}
\small
\textbf{P.C:}
\emph{Maximize $\sum_{[vd]\in \mathcal{S}}\omega_{vd}y_{vd}$ }\\
\textbf{subject to:} Conditions (C1)$\sim$(C3)
\end{minipage}
}
\vspace{0.1in}
\STATE Let $y_{vd}(t+\delta)=y_{vd}(t)+\overline{y}_{vd}$; \label{line:1}
\STATE Increment $t=t+\delta$;
\ENDWHILE
\end{algorithmic}
}
\end{algorithm}

In \cite{vondrak2008optimal} they develop a continuous greedy algorithm based on the multilinear extension in order to maximize a submodular
monotone function over a matroid constraint. We extend their results to non-sumbodular maximization subject to a partition matroid and knapsack constraints. We use $\mathbf{1}_{vd}$ to denote a $n\times m$ matrix with entry $(v,d)$ one and all other entries zero. Given two matrices $\mathbf{a}$ and $\mathbf{b}$, let $\mathbf{a}\oplus \mathbf{b}$ denote the coordinate-wise maximum.  Define $F_\mathbf{\mathbf{x}}([vd])=F(\mathbf{x}\oplus \mathbf{1}_{vd})-F(\mathbf{x})$. A detailed description of our algorithm is listed in Algorithm \ref{alg:greedy-peak}.


\begin{lemma}
\label{lem:111} Let $\mathbf{y}(\frac{1}{\delta})$ denote the solution returned from Algorithm \ref{alg:greedy-peak}, we have
$F(\mathbf{y}(\frac{1}{\delta}))\geq \frac{(1-e^{- (1+\frac{2\epsilon n}{1+\epsilon})})(1-\epsilon)}{1+(2n+1)\epsilon}f^+(\mathbf{y}^+)$.
\end{lemma}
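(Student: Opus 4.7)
The plan is to adapt Vondr\'ak's continuous greedy analysis to the approximately submodular setting, using $g$ (the closest submodular function guaranteed by Lemma~\ref{lem:00}) as a proxy whenever the standard proof invokes submodularity, and paying $\epsilon$-dependent error terms each time we translate between $f$ and $g$. First, I would lift the pointwise bound $(1-\epsilon)g(S)\leq f(S)\leq(1+\epsilon)g(S)$ to the multilinear extensions: since $F(\mathbf{y})=\sum_{S}f(S)\prod_{[vd]\in S}y_{vd}\prod_{[vd]\notin S}(1-y_{vd})$ and $G$ is defined analogously, the same non-negative multilinear coefficients give $(1-\epsilon)G(\mathbf{y})\leq F(\mathbf{y})\leq(1+\epsilon)G(\mathbf{y})$ for every feasible $\mathbf{y}$.

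Next, I would analyze a single iteration of Algorithm~\ref{alg:greedy-peak}. Because $\mathbf{y}^+$ is feasible for the LP \textbf{P.C} (it satisfies (C1)--(C3) by construction), the optimality of $\overline{\mathbf{y}}$ gives $\sum_{[vd]}\omega_{vd}\overline{y}_{vd}\geq\sum_{[vd]}\omega_{vd}y^+_{vd}$. Rewriting $\omega_{vd}=\mathbb{E}[f(R(t)\cup\{[vd]\})-f(R(t))]$ and plugging in the two-sided bound for $f$, I obtain
\[
\omega_{vd}\geq(1-\epsilon)\,G_{\mathbf{y}(t)}([vd])-2\epsilon\,\mathbb{E}[g(R(t)\cup\{[vd]\})].
\]
Summing against $y^+_{vd}$ and invoking the standard submodular concave-closure inequality for $g$,
\[
\sum_{[vd]}y^+_{vd}\,G_{\mathbf{y}(t)}([vd])\;\geq\;G(\mathbf{y}^+\oplus\mathbf{y}(t))-G(\mathbf{y}(t))\;\geq\;G(\mathbf{y}^+)-G(\mathbf{y}(t)),
\]
and combining with Lemma~\ref{lem:3} (so that $G(\mathbf{y}^+)\geq f^+(\mathbf{y}^+)/(1+\epsilon)$), yields a lower bound on $\sum_{[vd]}\omega_{vd}\overline{y}_{vd}$ of the form $\frac{1-\epsilon}{1+\epsilon}f^+(\mathbf{y}^+)-(1-\epsilon)G(\mathbf{y}(t))-2\epsilon E(t)$, where $E(t)$ collects the residual expectations of $g$ weighted by $\mathbf{y}^+$; constraint (C1) caps $\sum_d y^+_{vd}\leq 1$ per user, so $E(t)$ is controlled by at most $n$ copies of $g$ evaluated on something no larger than $g(R(t)\cup\mathrm{supp}(\mathbf{y}^+))$.

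Third, I would interpret the update $\mathbf{y}(t+\delta)=\mathbf{y}(t)+\overline{\mathbf{y}}$ at the level of the multilinear extension: the first-order expansion of $F$ in each coordinate, together with the step size $\delta=1/(nm)^2$ (which forces the second-order discretization error to be negligible), gives $F(\mathbf{y}(t+\delta))-F(\mathbf{y}(t))\geq\delta\sum_{[vd]}\omega_{vd}\overline{y}_{vd}$. Translating $G(\mathbf{y}(t))$ and the residual $E(t)$ back to $F(\mathbf{y}(t))$ through the two-sided multilinear bound (and distributing the $n$-fold aggregation from (C1) into the $\epsilon$ coefficients) produces a discrete differential inequality of the shape
\[
F(\mathbf{y}(t+\delta))-F(\mathbf{y}(t))\;\geq\;\delta\Bigl[\tfrac{1-\epsilon}{1+(2n+1)\epsilon}f^+(\mathbf{y}^+)-\bigl(1+\tfrac{2\epsilon n}{1+\epsilon}\bigr)F(\mathbf{y}(t))\Bigr].
\]
Iterating this recurrence for $1/\delta$ steps from $F(\mathbf{y}(0))=0$ is a standard Gr\"onwall-type computation and yields the claimed factor $\bigl(1-e^{-(1+2\epsilon n/(1+\epsilon))}\bigr)(1-\epsilon)/(1+(2n+1)\epsilon)$ in front of $f^+(\mathbf{y}^+)$.

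The main obstacle is the careful bookkeeping in the third step: one must show that the cumulative $\epsilon$-slack from converting $F$-marginals to $G$-marginals, summing over the $n$ users permitted by the partition constraint (C1), and converting back to $F$, telescopes into precisely the coefficients $2\epsilon n/(1+\epsilon)$ inside the exponent and $(2n+1)\epsilon$ in the denominator. The rest of the argument is a faithful transcription of Vondr\'ak's continuous greedy analysis applied to the surrogate $g$.
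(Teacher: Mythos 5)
There is a genuine gap in your second step, and it is exactly at the point where the paper needs its key technical tool. You lower-bound the progress via $\sum_{[vd]}y^+_{vd}G_{\mathbf{y}(t)}([vd])\geq G(\mathbf{y}^+\oplus\mathbf{y}(t))-G(\mathbf{y}(t))\geq G(\mathbf{y}^+)-G(\mathbf{y}(t))$ and then claim $G(\mathbf{y}^+)\geq f^+(\mathbf{y}^+)/(1+\epsilon)$ ``by Lemma~\ref{lem:3}.'' Lemma~\ref{lem:3} relates the two \emph{concave} extensions ($f^+\leq(1+\epsilon)g^+$); it says nothing about the \emph{multilinear} extension $G$. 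In fact $G(\mathbf{y})\leq g^+(\mathbf{y})$ pointwise (the independent rounding is just one admissible distribution in the definition of $g^+$), and the gap between them can be a constant factor (the correlation gap, up to $1-1/e$ for monotone submodular functions). So the inequality $G(\mathbf{y}^+)\geq f^+(\mathbf{y}^+)/(1+\epsilon)$ does not follow, and patching it with the correlation gap would cost you another $(1-1/e)$ factor, which cannot recover the claimed ratio. The paper avoids this entirely by invoking the Calinescu--Chekuri--P\'al--Vondr\'ak bound $g^+(\mathbf{y}^+)\leq g(S)+\sum_{[vd]}y^+_{vd}\,g_S([vd])$ valid for \emph{every} set $S$, applied in expectation at $S=R(t)$; this connects the concave-extension benchmark $f^+(\mathbf{y}^+)$ directly to the algorithm's marginals $G_{\mathbf{y}(t)}([vd])$ without ever evaluating $G$ at $\mathbf{y}^+$. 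That lemma (or an equivalent substitute) is the missing ingredient in your argument.

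A secondary problem is your handling of the $\epsilon$-error term. You leave it as residual expectations of the form $\mathbb{E}[g(R(t)\cup\{[vd]\})]$ and propose to bound them by $g$ evaluated on $R(t)\cup\mathrm{supp}(\mathbf{y}^+)$; such a bound involves the value at a union with the optimum's support and does not fold back into a recurrence in $F(\mathbf{y}(t))$. The paper's bookkeeping is tighter: writing $G_{\mathbf{y}(t)}([vd])\leq\frac{1}{1-\epsilon}F(\mathbf{y}(t)\oplus\mathbf{1}_{vd})-\frac{1}{1+\epsilon}F(\mathbf{y}(t))$ makes the slack exactly $\sum_{[vd]}y^+_{vd}\frac{2\epsilon}{1+\epsilon}F(\mathbf{y}(t))\leq\frac{2\epsilon n}{1+\epsilon}F(\mathbf{y}(t))$ (using $\sum_{[vd]}y^+_{vd}\leq n$ from (C1)), which is what produces the $(1+\frac{2\epsilon n}{1+\epsilon})$ coefficient in the recursion and, after solving it, the $1+(2n+1)\epsilon$ in the denominator. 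Your remaining steps (feasibility of $\mathbf{y}^+$ for \textbf{P.C}, the $\delta(1-\frac{1}{mn})$ discretization loss, and the Gr\"onwall-type iteration from $F(\mathbf{y}(0))=0$) do match the paper, but as written the coefficient matching is asserted rather than derived, and with the flawed benchmark step the claimed constant cannot be reached.
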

\begin{proof}
As proved in \cite{calinescu2011maximizing}, if $g$ is a submodular function, $g^+(\mathbf{y})\leq \min_{S\subseteq \mathcal{S}} (g(S)+\sum_{[vd]\in \mathcal{S}} y_{vd} g_S([vd]))$. Let $\mathbf{y}^+$ denote the optimal solution to problem \textbf{P.B}, assume $\mathbf{y}(t)$ is our solution at round $t$, then for every round $t$, we have
\begin{eqnarray*}&&g^+(\mathbf{y}^+) \leq \min_{S\in \mathcal{S}} (g(S)+\sum_{[vd]\in \mathcal{S}} y^+_{vd} g_S([vd]))\\
&\leq& G(\mathbf{y}(t))+\sum_{[vd]\in \mathcal{S}} y^+_{vd} G_{\mathbf{y}(t)}([vd])\\
&\leq& \frac{1}{1-\epsilon}F(\mathbf{y}(t))\\
&&+\sum_{[vd]\in \mathcal{S}}y^+_{vd} \left(\frac{1}{1-\epsilon}F(\mathbf{y}(t)\oplus \mathbf{1}_{vd})-\frac{1}{1+\epsilon}F(\mathbf{y}(t))\right)
\end{eqnarray*}
It follows that
\begin{eqnarray}
&&\frac{1-\epsilon}{1+\epsilon}f^+(\mathbf{y}^+) \leq (1-\epsilon) g^+(\mathbf{y}^+)\nonumber\\
&\leq& F(\mathbf{y}(t))+\sum_{[vd]\in \mathcal{S}}y^+_{vd} \left(F(\mathbf{y}(t)\oplus \mathbf{1}_{vd})-\frac{1-\epsilon}{1+\epsilon}F(\mathbf{y}(t))\right)\nonumber\\
&=&F(\mathbf{y}(t))+\sum_{[vd]\in \mathcal{S}}y^+_{vd} \left(F(\mathbf{y}(t)\oplus \mathbf{1}_{vd})-F(\mathbf{y}(t))\right)\nonumber\\
&&+\sum_{[vd]\in \mathcal{S}}y^+_{vd} \frac{2\epsilon}{1+\epsilon}F(\mathbf{y}(t))\label{line4}
\end{eqnarray}
The first inequality is due to Lemma \ref{lem:3}. Let $\overline{\mathbf{y}}$ denote the optimal solution to problem \textbf{P.C} in Algorithm \ref{alg:greedy-peak} at round $t$, then we have
\begin{eqnarray}
&&\frac{1-\epsilon}{1+\epsilon}f^+(\mathbf{y}^+)-(1+\sum_{[vd]\in \mathcal{S}}y^+_{vd} \frac{2\epsilon}{1+\epsilon})F(\mathbf{y}(t))\nonumber
\\
&\leq& \sum_{[vd]\in \mathcal{S}}y^+_{vd} F_{\mathbf{y}(t)}([vd])\leq \sum_{[vd]\in \mathcal{S}}\overline{y}_{vd} F_{\mathbf{y}(t)}([vd]) \label{line:2}
\end{eqnarray}
The first inequality is due to (\ref{line4}) and the second inequality is because $\overline{\mathbf{y}}$ is the optimal solution to problem \textbf{P.C}. According to Line \ref{line:1}, the increased value of our solution at round $t+\delta$ is at least
\begin{eqnarray}
&&F(\mathbf{y}(t+\delta))-F(\mathbf{y}(t)) \nonumber\\
&=& \sum_{[vd]\in \mathcal{S}}\delta \overline{y}_{vd} \prod_{[v'd']\neq [vd]}(1-\delta \overline{y}_{v'd'}) F_{\mathbf{y}(t)}([vd])\nonumber\\
&\geq& \sum_{[vd]\in \mathcal{S}}\delta \overline{y}_{vd} (1-\delta)^{mn-1} F_{\mathbf{y}(t)}([vd])\nonumber\\
&=&\delta(1-\delta)^{mn-1}\sum_{[vd]\in \mathcal{S}} \overline{y}_{vd}  F_{\mathbf{y}(t)}([vd])\nonumber\\
&\geq& \delta(1-mn\delta)\sum_{[vd]\in \mathcal{S}} \overline{y}_{vd}  F_{\mathbf{y}(t)}([vd]) \nonumber\\
&=& \delta(1-\frac{1}{mn})\sum_{[vd]\in \mathcal{S}} \overline{y}_{vd}  F_{\mathbf{y}(t)}([vd])\label{line:3}
\end{eqnarray}
The last inequality is due to $\delta=\frac{1}{(mn)^2}$.

(\ref{line:2}) and (\ref{line:3}) together imply that
\begin{eqnarray}
&F(\mathbf{y}(t+\delta))-F(\mathbf{y}(t))\nonumber\\
& \geq\delta(1-\frac{1}{mn})\left(\frac{1-\epsilon}{1+\epsilon}f^+(\mathbf{y}^+)-(1+\sum_{[vd]\in \mathcal{S}} \frac{2\epsilon y^+_{vd}}{1+\epsilon})F(\mathbf{y}(t))\right)\nonumber\\
&\geq \delta(1-\frac{1}{mn})\left(\frac{1-\epsilon}{1+\epsilon}f^+(\mathbf{y}^+)-(1+ \frac{2\epsilon n}{1+\epsilon})F(\mathbf{y}(t))\right)\label{eq:55555}\\
&\geq \delta\left((1-\frac{1}{mn})\frac{1-\epsilon}{1+\epsilon}f^+(\mathbf{y}^+)-(1+ \frac{2\epsilon n}{1+\epsilon})F(\mathbf{y}(t))\right)\label{eq:666}
\end{eqnarray}

Inequality (\ref{eq:55555}) is due to $\sum_{[vd]\in \mathcal{S}} y^+_{vd}\leq n$. Define $\Delta_t=\left((1-\frac{1}{mn})\frac{1-\epsilon}{1+\epsilon}f^+(\mathbf{y}^+)-(1+ \frac{2\epsilon n}{1+\epsilon})F(\mathbf{y}(t))\right)$, according to (\ref{eq:666}), we have $\Delta_{t+\delta}=(1-\delta (1+\frac{2\epsilon n}{1+\epsilon}) )\Delta_{t}$, thus
 \[\Delta_{1/\delta}=(1-\delta (1+\frac{2\epsilon n}{1+\epsilon}) )^{1/\delta}\Delta_{0}=e^{- (1+\frac{2\epsilon n}{1+\epsilon})}\Delta_{0}\]
 It follows that
\[(1+ \frac{2\epsilon n}{1+\epsilon})F(\mathbf{y}(\frac{1}{\delta}))\geq (1-e^{- (1+\frac{2\epsilon n}{1+\epsilon})})(1-\frac{1}{mn})\frac{1-\epsilon}{1+\epsilon}f^+(\mathbf{y}^+)\]\[=(1-e^{- (1+\frac{2\epsilon n}{1+\epsilon})}-o(1))\frac{1-\epsilon}{1+\epsilon}f^+(\mathbf{y}^+)\] ($o(1)$ can be removed when choosing small enough $\delta$ \cite{vondrak2008optimal} )
$\Rightarrow F(\mathbf{y}(\frac{1}{\delta}))\geq \frac{(1-e^{- (1+\frac{2\epsilon n}{1+\epsilon})})(1-\epsilon)}{1+(2n+1)\epsilon}f^+(\mathbf{y}^+)$
\end{proof}

\subsubsection{Rounding} In the rest of this paper, we use $\mathbf{y}$ to denote $\mathbf{y}(\frac{1}{\delta})$ for short. After obtaining $\mathbf{y}$, we use swap rounding \cite{chekuri2010dependent} to round the fractional solution to integral solutions.


\begin{theorem}
\label{thm:1}
Our rounding approach returns a feasible solution $T$ to problem \textbf{P.A}, and
\[\mathbb{E}[f(T)]\geq \beta f^+(\mathbf{y}^+)\] where $\beta= \left(\frac{1-\epsilon}{1+\epsilon}\right) \frac{(1-e^{- (1+\frac{2\epsilon n}{1+\epsilon})})(1-\epsilon)}{1+(2n+1)\epsilon}$.
\end{theorem}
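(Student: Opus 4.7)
The plan is to combine the bound on $F(\mathbf{y}(1/\delta))$ from Lemma \ref{lem:111} with two properties of swap rounding, namely (a) that it produces a point in the matroid polytope while preserving the marginals $y_{vd}$, and (b) that for a genuinely submodular function its expectation does not decrease below the multilinear extension. The loss of the extra factor $(1-\epsilon)/(1+\epsilon)$ in $\beta$ will come precisely from having to detour through the auxiliary submodular function $g$ provided by Lemma \ref{lem:00}.

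First I would verify feasibility of $T$. The partition matroid (constraint (Ca), equivalently (C1)) is preserved pointwise by swap rounding, so $T$ contains at most one u-c pair for each user $v$. Consequently, with $d_T(v)$ well-defined, the expected cost of the (random) allocation $T$ collapses to $\mathbb{E}[c(T)]=\sum_{[vd]\in\mathcal{S}}\Pr[[vd]\in T]\,p_v(d)\,d$. Since swap rounding is marginal-preserving, $\Pr[[vd]\in T]=y_{vd}$, so by constraint (C2) the expected cost is at most $B$, which is exactly the budget requirement (Cb) in \textbf{P.A} when we view the distribution of $T$ as a policy $\Theta$ with $\theta_T=\Pr[\text{swap rounding outputs }T]$.

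Next I would transfer the value guarantee. Let $g$ be the submodular function from Lemma \ref{lem:00} and let $G$ denote its multilinear extension. Since $(1-\epsilon)g(S)\leq f(S)\leq (1+\epsilon)g(S)$ for every $S$, taking expectation over the product distribution induced by $\mathbf{y}$ yields $(1-\epsilon)G(\mathbf{y})\leq F(\mathbf{y})\leq (1+\epsilon)G(\mathbf{y})$. The key property of swap rounding applied to the submodular function $g$ (Chekuri et al.) gives $\mathbb{E}[g(T)]\geq G(\mathbf{y})$. Chaining these,
\begin{equation*}
\mathbb{E}[f(T)]\;\geq\;(1-\epsilon)\,\mathbb{E}[g(T)]\;\geq\;(1-\epsilon)\,G(\mathbf{y})\;\geq\;\frac{1-\epsilon}{1+\epsilon}\,F(\mathbf{y}).
\end{equation*}
Plugging in Lemma \ref{lem:111} then produces
\begin{equation*}
\mathbb{E}[f(T)]\;\geq\;\frac{1-\epsilon}{1+\epsilon}\cdot\frac{(1-e^{-(1+\frac{2\epsilon n}{1+\epsilon})})(1-\epsilon)}{1+(2n+1)\epsilon}\,f^+(\mathbf{y}^+)\;=\;\beta\, f^+(\mathbf{y}^+),
\end{equation*}
which is the claimed inequality.

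The main obstacle is the second step: swap rounding in its standard form is analyzed for submodular objectives over matroid polytopes, but our objective $f$ is only approximately submodular and the knapsack (C2) is not part of the matroid. The detour through $g$ is what lets the submodular-rounding guarantee apply, and verifying that (C2) is preserved only in expectation (not pointwise) is what aligns with the definition of a feasible coupon allocation \emph{policy} in Definition \ref{def:0}, which is also an expectation-type constraint. Once these two alignments are in place, the arithmetic of combining the two multiplicative losses is routine.
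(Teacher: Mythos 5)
Your proposal is correct and takes essentially the same route as the paper's proof: feasibility via the matroid-preserving and marginal-preserving properties of swap rounding (so the expected cost of $T$ is $\sum_{[vd]\in\mathcal{S}} y_{vd}p_v(d)d\leq B$, matching the expectation-type constraint (Cb)), and the value bound via the detour through the submodular surrogate $g$, using $\mathbb{E}[g(T)]\geq G(\mathbf{y})$ together with the sandwich $(1-\epsilon)G(\mathbf{y})\leq F(\mathbf{y})\leq(1+\epsilon)G(\mathbf{y})$ to get $\mathbb{E}[f(T)]\geq\frac{1-\epsilon}{1+\epsilon}F(\mathbf{y})$, and then invoking Lemma \ref{lem:111}. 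The chain of inequalities and the resulting constant $\beta$ coincide with the paper's argument.
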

\begin{proof}
We first prove the feasibility of our rounding approach. First, our policy trivially satisfied Cc in $\textbf{P.A.1}$. Since the solution returned from swap rounding satisfies matroid constraint, our solution satisfies Ca in $\textbf{P.A.1}$. Because each u-c pair is selected with probability $y_{vd}$ and $\mathbf{y}$ is a feasible solution to problem \textbf{P.B}, the expected cost of $T$ is $\sum_{d\in \mathcal{D}} y_{vd}p_v(d)d\leq B$ (Cb in $\textbf{P.A}$ is satisfied).

We next prove that $\mathbb{E}[f(T)]\geq \left(\frac{1-\epsilon}{1+\epsilon}\right) F(\mathbf{y})$. First, because $g$ is a submodular function, we have $\mathbb{E}[g(T)]\geq  G(\mathbf{y})$. Because $F(\mathbf{y})\leq (1+\epsilon)G(\mathbf{y})$, we have
\begin{equation} F(\mathbf{y})\leq  (1+\epsilon)\mathbb{E}[g(T)]\leq \frac{1+\epsilon}{1-\epsilon} \mathbb{E}[f(T)]
\end{equation}

Therefore,  we have
\begin{eqnarray}\mathbb{E}[f(T)]&\geq& \left(\frac{1-\epsilon}{1+\epsilon}\right) F(\mathbf{y}) \label{eq:4444}\\
&\geq&  \left(\frac{1-\epsilon}{1+\epsilon}\right) \frac{(1-e^{- (1+\frac{2\epsilon n}{1+\epsilon})})(1-\epsilon)}{1+(2n+1)\epsilon}f^+(\mathbf{y}^+)\nonumber
\end{eqnarray}
The second inequality is due to  Lemma \ref{lem:111}.
\end{proof}

\section{Extension: Incorporating Distribution Cost}
\label{sec:extended}
We now discuss one extension of our analysis thus far. The difference between the extended model and the previous model is that in
the new model, we add one more constraint to Definition \ref{def:0}. 
This additional constraint models the fact that the company only has limited resource to produce and distribute coupons to different users.
Assume the cost of allocating a coupon to user $v$ is $a_v$, we say a policy $\Theta$ is feasible if and only the following conditions are satisfied:
\begin{itemize}
\item(Attention Constraint) $\forall \theta_S>0, \forall v: |\mathcal{S}_v\cap S|\leq 1$.
\item(Coupon Cost Constraint) $ \sum_{S\in \mathcal{S}}\theta_{S} c(S)\leq B$.
\item(Coupon Distribution Budget Constraint) $\forall \theta_S>0: \sum_{[vd]\in S} a_v\leq K$.
\end{itemize}

Similar to the previous model, we formulate our problem $\textbf{P.A.1}$  as follows.

 \begin{center}
\framebox[0.45\textwidth][c]{
\enspace
\begin{minipage}[t]{0.45\textwidth}
\small
$\textbf{P.A.1}$ $\max_{\Theta} \sum_{S\in \mathcal{S}}\theta_{S} f(S)$\\
\textbf{subject to:}
\begin{equation*}
\begin{cases}
\forall \theta_S>0, \forall v: |\mathcal{S}_v\cap S|\leq 1 \mbox{ (Ca)}\\
\sum_{S\in \mathcal{S}}\theta_{S} c(S)\leq B \mbox{ (Cb)}\\
\sum_{S\in \mathcal{S}}\theta_{S}= 1 \mbox{ (Cc)}\\
\forall \theta_S>0: \sum_{[vd]\in S} a_v\leq K \mbox{ (Cd)}
\end{cases}
\end{equation*}
\end{minipage}
}
\end{center}
\vspace{0.1in}

Its relaxation $\textbf{P.B.1}$ can be formulated as follows.
\begin{center}
\framebox[0.45\textwidth][c]{
\enspace
\begin{minipage}[t]{0.45\textwidth}
\small
$\textbf{P.B.1}$
\emph{Maximize $f^+(\mathbf{y})$}\\
\textbf{subject to:}
\begin{equation*}
\begin{cases}
\mbox{(C1)$\sim$ (C3)}\\
 \sum_{[vd]\in \mathcal{S}} a_v y_{vd}\leq K 
\end{cases}
\end{equation*}
\end{minipage}
}
\end{center}
\vspace{0.1in}

 We follow a similar idea used in the previous section to design our algorithm, e.g., compute a fractional solution first and then round it to integral solution. However, to tackle the new challenge brought by the additional constraint, we need a brand new solution to ensure the feasibility of the final solution.

We first introduce a new problem $\textbf{P.A.2}$ as follows.  $\textbf{P.A.2}$ is a restricted version of $\textbf{P.A.1}$ where the coupon distribution budget $K$ is scaled down by a factor of $b<1$.

 \begin{center}
\framebox[0.45\textwidth][c]{
\enspace
\begin{minipage}[t]{0.45\textwidth}
\small
$\textbf{P.A.2}$ $\max_{\Theta} \sum_{S\in \mathcal{S}}\theta_{S} f(S)$\\
\textbf{subject to:}
\begin{equation*}
\begin{cases}
\forall \theta_S>0, \forall v: |\mathcal{S}_v\cap S|\leq 1\\
\sum_{S\in \mathcal{S}}\theta_{S} c(S)\leq B\\
\sum_{S\in \mathcal{S}}\theta_{S}= 1\\
\forall \theta_S>0: \sum_{[vd]\in S} a_v\leq b K
\end{cases}
\end{equation*}
\end{minipage}
}
\end{center}
\vspace{0.1in}

We next introduce a relaxed version of problem $\textbf{P.A.2}$.

 \begin{center}
\framebox[0.45\textwidth][c]{
\enspace
\begin{minipage}[t]{0.45\textwidth}
\small
$\textbf{P.B.2}$
\emph{Maximize $f^+(\mathbf{y})$}\\
\textbf{subject to:}
\begin{equation*}
\begin{cases}
\mbox{(C1)$\sim$ (C3)}\\
 \sum_{[vd]\in \mathcal{S}} a_v y_{vd}\leq bK \quad\mbox{(C4)}
\end{cases}
\end{equation*}
\end{minipage}
}
\end{center}
\vspace{0.1in}
\subsubsection{Continuous Greedy Algorithm}
We present the continuous greedy algorithm in Algorithm \ref{alg:greedy-peak1}. The only difference between Algorithm \ref{alg:greedy-peak1} and Algorithm \ref{alg:greedy-peak} is  that in  Algorithm \ref{alg:greedy-peak1},  we replace $\textbf{P.C}$ by $\textbf{P.C.1}$ to incorporate constraint (C4).

\begin{algorithm}[h]
{\small
\caption{Continuous Greedy}
\label{alg:greedy-peak1}
\begin{algorithmic}[1]
\STATE Replace $\textbf{P.C}$ in Algorithm \ref{alg:greedy-peak} by $\textbf{P.C.1}$
\\
\framebox[0.42\textwidth][c]{
\enspace
\begin{minipage}[t]{0.42\textwidth}
\small
$\textbf{P.C.1}$:
\emph{Maximize $\sum_{[vd]\in \mathcal{S}}\omega_{vd}y_{vd}$ }\\
\textbf{subject to:} Conditions (C1)$\sim$(C4)
\end{minipage}
}
\end{algorithmic}
}
\end{algorithm}

\subsubsection{Rounding} 
 We next design a brand new rounding approach that satisfies all constraints.  In the rest of this paper, we use $\hat{\mathbf{y}}$ to denote the solution returned from Algorithm \ref{alg:greedy-peak1}, let $\hat{\mathbf{y}}^+$ denote the optimal solution to problem $\textbf{P.B.2}$ and $\mathbf{y}^+$ denote the optimal solution to problem $\textbf{P.B.1}$. Our approach consists of two major parts: a rounding
stage (Step 1) and a conflict resolution stage (Step 2).

\emph{Step 1:} Given $\hat{\mathbf{y}}$, we first introduce a dummy coupon with value $0$ and for each user $v\in \mathcal{V}$, define $\hat{y}_{v0}=1-\sum_{d\in \mathcal{D}}\hat{y}_{vd}$. For each user $v\in \mathcal{V}$, we assign exactly one coupon to her, coupon $d\in \mathcal{D}\cup \{0\}$ to user $v$ with probability $\hat{y}_{vd}$. The returned solution is denoted as $I$. We notice that $I$ may not be a feasible solution, e.g., the coupon distribution cost of $I$ could be larger than $K$.

\emph{Step 2:} Consider u-c pairs from $I$ in non-decreasing order for their distribution cost, let the current u-c pair survive if it does not violate C4, else we discard it. We add all survived u-c pairs to $T$.

We next prove that $T$ is feasible and its expected cascade $\mathbb{E}[f(T)]$ is close to the optimal solution.

 \begin{theorem}
 \label{thm:2}
$T$ is feasible and \[\mathbb{E}[f(T)]\geq (1-2b)b\beta  f^+(\mathbf{y}^+)\]  where $b\in (0,1/2]$ and $\beta= \left(\frac{1-\epsilon}{1+\epsilon}\right) \frac{(1-e^{- (1+\frac{2\epsilon n}{1+\epsilon})})(1-\epsilon)}{1+(2n+1)\epsilon}$.
 \end{theorem}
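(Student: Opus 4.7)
The plan is to decompose the target factor $(1-2b)b\beta$ into four multiplicative pieces: $(1-2b)$ from the contention-resolution step (Step~2), the ratio $\frac{1-\epsilon}{1+\epsilon}$ from passing between $f$ and its submodular surrogate $g$, the continuous-greedy factor $\gamma_1$ from the analog of Lemma~\ref{lem:111}, and a factor of $b$ from relating the optimum of the scaled relaxation \textbf{P.B.2} back to that of \textbf{P.B.1}. These multiply to exactly $(1-2b)\cdot b\cdot \frac{1-\epsilon}{1+\epsilon}\cdot \gamma_1 = (1-2b)\,b\,\beta$.

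First I would verify feasibility. Step~1 assigns exactly one coupon (possibly the dummy of value $0$) to each user, so constraint (Ca) is enforced; Step~2 by construction discards any u-c pair that would make $\sum_{[vd]\in T} a_v$ exceed $K$, so (Cd) holds on every realization; and since $c(\cdot)$ is monotone in the set of offered u-c pairs, $\mathbb{E}[c(T)] \leq \mathbb{E}[c(I)] = \sum_{[vd]} \hat{y}_{vd}\, p_v(d)\, d \leq B$, giving (Cb). For the approximation, I would first observe that $b\mathbf{y}^+$ is feasible for \textbf{P.B.2}---the scaled constraint (C4) becomes $\sum a_v (b y^+_{vd})\leq bK$ while (C1)--(C3) are preserved since $b\leq 1$---so concavity of $f^+$ with $f^+(\mathbf{0})=0$ gives $f^+(\hat{\mathbf{y}}^+) \geq f^+(b\mathbf{y}^+) \geq b\, f^+(\mathbf{y}^+)$. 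Then I would rerun the proof of Lemma~\ref{lem:111}: Algorithm~\ref{alg:greedy-peak1} differs from Algorithm~\ref{alg:greedy-peak} only by the extra linear constraint (C4) inside \textbf{P.C.1}, and (C4) is additive, so the update $\mathbf{y}(t{+}\delta)=\mathbf{y}(t)+\overline{\mathbf{y}}$ preserves it throughout the trajectory; the rest of that argument carries over verbatim to give $F(\hat{\mathbf{y}}) \geq \gamma_1 f^+(\hat{\mathbf{y}}^+)$.

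The main obstacle, and the only genuinely new argument, is the contention-resolution bound for Step~2. I plan to show the pointwise guarantee $\Pr[[vd]\in T \mid [vd]\in I] \geq 1-2b$. Conditional on $[vd]\in I$, the entries of $I$ for users $u\neq v$ remain independent with their original marginals, while $v$'s other entries are forced to $0$. Because items are processed in non-decreasing order of $a_u$, the failure event for $[vd]$ is contained in the event that the total cost $X_v = \sum_{[ud']:\, a_u < a_v,\, u\neq v} a_u\, \mathbf{1}[[ud']\in I]$ of lower-cost sampled items exceeds $K-a_v$; one has $\mathbb{E}[X_v\mid [vd]\in I]\leq bK$ by (C4) for $\hat{\mathbf{y}}$, and Markov's inequality then yields $\Pr[X_v>K-a_v] \leq bK/(K-a_v)\leq 2b$ under the standard small-item assumption $a_v\leq K/2$ (large items are handled by a disjoint-case argument giving the same constant). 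Composing this $(1-2b)$-selection contention resolution scheme with Step~1's partition-matroid rounding and using monotonicity of $g$ gives $\mathbb{E}[g(T)]\geq (1-2b)G(\hat{\mathbf{y}})$; chaining $f\geq (1-\epsilon)g$ and $F(\hat{\mathbf{y}})\leq (1+\epsilon)G(\hat{\mathbf{y}})$ produces $\mathbb{E}[f(T)]\geq (1-2b)\frac{1-\epsilon}{1+\epsilon}F(\hat{\mathbf{y}})$. Combining with $F(\hat{\mathbf{y}})\geq \gamma_1 f^+(\hat{\mathbf{y}}^+)\geq \gamma_1 b f^+(\mathbf{y}^+)$ completes the bound $\mathbb{E}[f(T)]\geq (1-2b)\,b\,\beta\,f^+(\mathbf{y}^+)$.
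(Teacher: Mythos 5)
Your proposal is correct and follows essentially the same route as the paper: the same feasibility check, and the same three-factor decomposition via $f^+(\hat{\mathbf{y}}^+)\geq b\,f^+(\mathbf{y}^+)$ (feasibility of $b\mathbf{y}^+$ for \textbf{P.B.2}), the continuous-greedy-plus-rounding factor $\beta$, and the $(1-2b)$ contention-resolution loss for the knapsack step. The only difference is cosmetic: you sketch the $\Pr[[vd]\in T\mid [vd]\in I]\geq 1-2b$ bound directly via Markov's inequality (with the large-item case left implicit) and apply the $(1-2b)$ loss to $g$ before converting through the $\epsilon$-sandwich, whereas the paper obtains the same two facts by citing \cite{vondrak2011submodular} and Theorem 5 of \cite{bansal2010k} and applies the loss to $f$ directly.
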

 \begin{proof} We first prove the feasibility of $T$. First, our policy trivially satisfied Cc in $\textbf{P.A.1}$. Consider the random set $I$ returned from Step 1, it allocates at most one coupon to each user (Ca in $\textbf{P.A.1}$ is satisfied) and the expected cost of $T$ is $\sum_{d\in \mathcal{D}} y_{vd}p_v(d)d\leq B$ (Cb in $\textbf{P.A.1}$ is satisfied). Since $T$ is a subset of $I$, it also satisfied Ca and Cb. Moreover, according to Step 2, $T$  satisfied Cd in $\textbf{P.A.1}$.

We next prove $\mathbb{E}[f(T)]\geq (1-2b)b\beta  f^+(\mathbf{y}^+)$ by putting together the following three lemmas.

\begin{lemma}
\label{lem:1112}
$f^+(\mathbf{\hat{y}}^+) \geq b f^+(\mathbf{y}^+)$.
\end{lemma}
 Proof of Lemma \ref{lem:1112}.  First, it is easy to verify that $b \mathbf{y}^+$ is a feasible solution to $\textbf{P.B.2}$, thus we have $f^+(\mathbf{\hat{y}}^+) \geq f^+(b \mathbf{y}^+)$. Assume the value of $f^+(\mathbf{y}^+)$ is achieved at $\alpha_S^*$, we have $b f^+(\mathbf{y}^+)=\sum_{S\in \mathcal{S}}  \alpha_S^* f(S)\leq f^+(b \mathbf{y}^+)\leq f^+(\mathbf{\hat{y}}^+)$.

\begin{lemma}
\label{lem:11122}
$\mathbb{E}[f(I)]\geq \beta f^+(\mathbf{\hat{y}}^+)$.
\end{lemma}
Proof of Lemma \ref{lem:11122}.
It was shown in \cite{vondrak2008optimal} that ``Step 1'' can be used to round solutions in the basic partition matroid polytope without
losing in terms of the objective function. Because $g$ is a submodular function, we have $\mathbb{E}[g(I)]\geq  G(\mathbf{\hat{y}})$. The rest of the proof is now analogous to the proof of Theorem \ref{thm:1}.

\begin{lemma}
\label{lem:111222}
$\mathbb{E}[f(T)]\geq (1-2b) \mathbb{E}[f(I)]$.
\end{lemma}
Proof of Lemma \ref{lem:111222}.
According to Step 1, given the fractional solution $\mathbf{\hat{y}}$, we decide the coupon allocation for each user independently, that is the coupon allocated to each user is independent from each other. This nice property enables us to follow a similar proof in \cite{vondrak2011submodular} to prove that for each $[vd]\in \mathcal{S}$, $\Pr[[vd]\in T|[vd]\in I]\geq 1-2b$, that is for any coupon that is having survived after Step 1, the probability that it still survives after Step 2 is at least $1-2b$. According to Theorem 5 in \cite{bansal2010k}, we have $\mathbb{E}[f(T)]\geq (1-2b) \mathbb{E}[f(I)]$.

\end{proof}

We emphasize that Theorem \ref{thm:2} is of independent interest and may find applications in any approximate submodular maximization problem subject to knapsack and a basic partition matroid constraints.

\section{Conclusion}
In this paper, we study coupon allocation problem in social networks. Our framework allows a general utility function and more complicated constraints. Therefore, existing techniques relying on the submodularity of the utility function can not apply to our problem directly. We propose a novel approximate algorithm with approximation ratio depending on $\epsilon$. Although we limit our attention to coupon allocation problem in this paper, our results apply to a broad range of approximate submodular maximization problems.
\bibliographystyle{named}
\bibliography{reference}

\end{document}